\newcommand{\score}[2]{Score_{#1}}
\newcommand{\rank}[2]{Rank_{#1}}
\newcommand{\cost}[2]{K(#1,#2)}
\newenvironment{algo}[1]
{\begin{figure*}[h]\rule{\textwidth}{1pt}\vspace{-.3cm}
\caption*{#1}\vspace{-.3cm}\rule{\textwidth}{.5pt}\vspace{-.5cm}}
{\vspace{-.3cm}\rule{\textwidth}{.5pt}\vspace{-.5cm}\end{figure*}}
\newcommand{\algorandomsort}
{\hyperref[algo:randomsort]{\textsc{RandomSort}}}
\newcommand{\algofootrule}
{\hyperref[algo:footrule]{\textsc{Footrule$+$}}}
\newcommand{\algoborda}
{\hyperref[algo:borda]{\textsc{Borda+}}}
\newcommand{\algosorting}
{\hyperref[algo:sorting]{\textsc{Score-then-Borda$+$}}}
\newcommand{\algoptas}
{\hyperref[algo:ptas]{\textsc{Score-then-PTAS}}}
\newcommand{\algoeptas}
{\hyperref[algo:eptas]{\textsc{Score-then-Adjust}}}
\begin{document}

\title{\textsc{How to aggregate Top-lists:
Approximation algorithms via scores and average ranks}}

\author{
  Claire Mathieu\\
  CNRS, Paris, France\\
  \texttt{clairemmathieu@gmail.com}
  \and
  Simon Mauras\\
  IRIF, Paris, France\\
  \texttt{simon.mauras@irif.fr}
}

\title{
  \Large How to aggregate Top-lists:\\
  Approximation algorithms via scores and average ranks
}
\author{
  Claire Mathieu
  \thanks{Université de Paris, IRIF, CNRS, F-75013 Paris, France}
  \\\texttt{clairemmathieu@gmail.com}
  \and
  Simon Mauras
  \footnotemark[1]
  \\\texttt{simon.mauras@irif.fr}
}

\date{}

\maketitle

\fancyfoot[C]{\thepage}

\begin{abstract} \small\baselineskip=9pt
A top-list is a possibly incomplete ranking of elements:
only a subset of the elements are ranked, with all unranked elements tied for last.
Top-list aggregation, a generalization of the well-known rank aggregation problem,
takes as input a collection of top-lists  and aggregates them into a single complete
ranking, aiming to minimize the number of upsets (pairs ranked in opposite order
in the input and in the output).
In this paper, we give simple approximation algorithms for top-list aggregation.  
\begin{itemize}
\item\smallskip
We generalize the footrule algorithm for rank aggregation (which minimizes
Spearman's footrule distance), yielding a simple 2-approximation algorithm
for top-list aggregation. 
\item\smallskip
Ailon's \textsc{RepeatChoice} algorithm for bucket-orders aggregation yields a
2-approximation algorithm for top-list aggregation. Using inspiration from
approval voting, we define the score of an element as the frequency with which
it is ranked, i.e. appears in an input top-list. We reinterpret \textsc{RepeatChoice}
for top-list aggregation as a randomized algorithm using variables whose
expectations correspond to score and to the average rank of an element given that it is ranked.
\item\smallskip
Using average ranks, we generalize and analyze Borda's algorithm for rank
aggregation. We observe that the natural generalization is not a constant approximation.
\item\smallskip
We design a simple 2-phase variant of the Generalized Borda's algorithm,
roughly sorting by scores and breaking ties by average ranks,
yielding another simple constant-approximation algorithm for top-list aggregation.
\item\smallskip
We then design another 2-phase variant in which in order to break ties we use,
as a black box, the Mathieu-Schudy PTAS for rank aggregation,
yielding a PTAS for top-list aggregation. This solves an open problem posed by Ailon.
\item\smallskip
Finally, in the special case in which all input lists have length at most $k$,
we design another simple 2-phase algorithm based on sorting by scores,
and prove that it is an EPTAS --
the complexity is $\mathcal O(n \log n)$ when $k=o(\log n)$.
\end{itemize}
\end{abstract}



\section{Introduction}
\subsection{Context.}
Rank aggregation is a classical problem in combinatorial optimization, where
the goal is to take elements from a ground set (\emph{candidates}) and find a ranking 
which is ``closest'' to a set of input rankings (\emph{voting profile}).
Rank aggregation comes up in machine learning \cite{cohen1998learning},
natural language processing \cite{li2014learning}, bio-informatics \cite{li2017comparative}, 
and is relevant in the field of information retrieval
(meta search and spam reduction \cite{dwork2001rank}, similarity search \cite{fagin2003efficient} and more).
Historically, rank aggregation was first
studied in social choice theory, where the underlying properties of
a ranking method are of critical importance \cite{borda1781memoire,condorcet1785essai,arrow1951social}.
In this paper, we use terminology  (candidates, votes, voting profile, ...) derived from social choice theory.

\subsection{Rank aggregation.}
There are several ways to measure how close the output ranking is to the input rankings:
the most popular is Kendall's tau distance, that has several satisfying structural
properties~\cite{kemeny1959mathematics,kemeny1962mathematical,young1978consistent}.
In this paper, we focus on Kendall's tau distance, that counts the number of pairs
of candidates that are ranked in reverse order in the two rankings.
Rank aggregation is NP-hard~\cite{bartholdi1989voting,dwork2001rank},
but one of the simplest randomized algorithm yields a constant factor approximation:
algorithm \textsc{Random} simply takes a random input ranking and declares it to be the output.
Many other constant factor approximation algorithms are known:
the Footrule algorithm \cite{dwork2001rank};
the randomized \textsc{KwikSort} algorithm and variants \cite{ailon2005aggregating, ailon2008aggregating},
derandomized in \cite{van2009deterministic};
Borda's method  \cite{coppersmith2010ordering};
Copeland's method, the median rank algorithm and more \cite{fagin2016algorithmic}.
There is even a polynomial-time approximation scheme~\cite{kenyon2007rank,mathieu2009rank},
but work on constant factor approximations nevertheless continued:
they are popular because of their simplicity.
Experimental studies can be found e.g. in \cite{coleman2009ranking}
(algorithms inspired by standard sorting algorithms, and local search algorithms);
\cite{schalekamp2009rank} (Footrule, Markov chain algorithms,
sorting algorithms, local search algorithms, hybrid algorithms, and more);
\cite{ali2012experiments} (additionally includes exact LP-based and branch-and-bound
algorithms as well as various heuristics, with a focus on social choice theory). 

\subsection{From full-ranking to top-lists.}
A meta search engine aggregates information from different search engines to
answer users' requests with a ranked selection of web-pages. In such settings,
a useful extension of the rank aggregation problem is to deal with incomplete
data, where each vote provides, not a full-ranking of all candidates, but an
ordered selection of his preferred candidates, the remaining ones being
implicitly tied at the end.
Such a partial ranking is called a top-list. 
Incomplete rankings were studied in \cite{dwork2001rank}, but for the most part
without the assumption that candidates that do not appear are implicitly ranked
after the candidates that appear in the list, so the input model is different. Top-list aggregation also comes up in bio-informatics: ``{\em In previous research, attention has been focused on aggregating full lists. However, partial and/or top ranked lists are prevalent because of the great heterogeneity of genomic studies and limited resources for follow-up investigation.}" \cite{li2017comparative}. 
There is an extensive discussion about distances between top-lists
in~\cite{fagin2003conference,fagin2003journal} (and more generaly
between bucket-orders in~\cite{fagin2004conference,fagin2006journal}),
and they propose several aggregation problems.\footnote{As a side remark, we observe that if we are only interested in the first
few candidates of the output full-ranking, computing a top-list which is closest
to the input top-lists might not be a good idea: there are instances where this has undesirable artefacts.} Experimental studies (for a different, related objective) can be found for example in~\cite{cohen1998learning} for top-30-lists. 

\subsection{Top-list aggregation.}
In this paper we study the top-list aggregation problem (\textsc{Top-Agg})
that takes top-lists as input. The goal is to find an full-ranking that minimizes
the average distance to a top-list from the input. We use a natural generalization of
Kendall's tau distance: we still count the number of pairs of candidates that
are ranked in reverse order in the two rankings, without counting pairs of
candidates that are tied in one ranking.
  This problem was defined by Ailon in \cite{ailon2007aggregation,ailon2010aggregation},
where he showed that \textsc{Top-Agg} is NP-Hard even if each top-list rank exactly
two candidates.
Some approximation algorithm for full-ranking aggregation extend to top-list
aggregation\footnote{They actually extend to bucket orders, a further generalization, see end of section.}: Algorithm \textsc{RepeatChoice} (Ailon's generalization of
algorithm \textsc{Random}) is a 2-approximation.
Algorithm \textsc{KwikSort}
(introduced in \cite{ailon2005aggregating,ailon2008aggregating},
improved in \cite{ailon2007aggregation,ailon2010aggregation},
and determinized in \cite{van2009deterministic}) also extends to top-list aggregation,
and one of its variants yields a 3/2 approximation algorithm: candidates are ranked using a quick-sort like approach and a
randomized rounding of the relaxation of an integer-LP. 

\subsection{Our results.}
We study whether other approximation algorithms for rank aggregation can be
extended to aggregate top-lists.
The Footrule Algorithm~\cite{dwork2001rank} is an intuitive way to aggregate
full-rankings into a single full-ranking, using Spearman's footrule distance
(which approximates Kendall's tau distance, and is much easier to minimize).
We use a natural generalization of the footrule distance: the generalized
footrule distance between two partial orders is the minimum footrule distance
between their linear extensions (similar distance are discussed in
\cite{fagin2006journal}). This enables us to extend the result from~\cite{dwork2001rank}.

\begin{restatable*}{theorem}{thmfootrule}
  \label{theorem:footrule}
  Algorithm~\algofootrule\ is a 2-approximation for \textsc{Top-Agg}.
  Its running time is linear in the size of the input,
  and cubic in the number of candidates.
\end{restatable*}

Since Ailon's work~\cite{ailon2010aggregation}, a simple 2-approximation for
\textsc{Top-Agg} was already known:
algorithm \textsc{RepeatChoice} is a randomized algorithm
that was designed in the more general setting of bucket-orders.
A close-up look at \textsc{RepeatChoice} in the context of \textsc{Top-Agg}
reveals that it can be reinterpreted to use random variables whose averages are
related to the scores and average ranks of the candidates.
The score of a candidate is the frequency with which he is ranked in the input,
and his average rank is the average value of his rank when he is ranked. 

\begin{restatable*}{theorem}{thmrandomsort}
  \label{theorem:randomsort}
  Algorithm \algorandomsort\ (the specialization of Algorithm
  \textsc{RepeatChoice} from \cite{ailon2010aggregation} to \textsc{Top-Agg})
  is a randomized 2-approximation algorithm.
  Its running time is quasi-linear in the size of the input.
\end{restatable*}

In the context of full-ranking, sorting candidates by average rank is precisely
Borda's voting method~\cite{borda1781memoire}, another simple and
popular algorithm, known to be a 5-approximation~\cite{coppersmith2010ordering}.
This leads us to analyze the generalization of Borda's algorithm to \textsc{Top-Agg},
where a preprocessing step eliminates all candidates with zero scores such that
the average rank of each candidate is well defined.

\begin{restatable*}{theorem}{thmborda}
  \label{theorem:borda}
  Algorithm~\algoborda\ is a $(4\alpha+2)$-approximation algorithm for \textsc{Top-Agg},
  where $\alpha$ is the ratio between the maximum and minimum scores of candidates
  (assuming that all scores are non-zero).
  Its running time is quasi-linear in the size of the input.
\end{restatable*}

Unfortunately, Algorithm~\algoborda\ is not an $\mathcal O(1)$-approximation for
\textsc{Top-Agg} (see Section \ref{section:borda} for a counterexample).
This indicates that the score of a candidate is of primary importance.
We therefore design a slightly less simple, but still elementary 2-phase algorithm,
that first roughly sorts by scores, putting into buckets candidates that have
similar scores, then refines the ordering using average rank, yielding
a new constant factor approximation. 

\begin{restatable*}{theorem}{thmsorting}
  \label{theorem:sorting}
  Algorithm \algosorting\ is a randomized $(8e+4)$-approximation algorithm,
  which only uses the scores and average ranks of the candidates.
  Its running time quasi-linear in the number of candidates.
\end{restatable*}

The proof of Theorem \ref{theorem:sorting} relies on a critical lemma
(Lemma \ref{lemma:partition}) proving that there exists a near-optimal
full-ranking that respects the rough ordering by scores.
As for aggregation ranking, not only are there $\mathcal O(1)$-approximations,
but there also exists a polynomial time approximation scheme~\cite{mathieu2009rank}.
Building on the intuition acquired so far, it is now easy to generalize that result,
designing a 2-phase algorithm, that first roughly sorts by scores,
then refines the ordering using the full-ranking aggregation PTAS, yielding a PTAS for
\textsc{Top-Agg}.

\begin{restatable*}[PTAS for \textsc{Top-Agg}]{theorem}{theoremptas}
\label{theorem:ptas}
For all fixed $\varepsilon > 0$, Algorithm \algoptas\ is a randomized
$(1+\varepsilon)$-approximation algorithm for \textsc{Top-Agg}. Its time complexity is
$\mathcal O\left(\frac{1}{\varepsilon} \cdot n^3 \log n \right)
+ n \exp(\exp(\mathcal O(\frac{1}{\varepsilon})))$,
the algorithm can be derandomized with an additional cost of
$\exp(\log n \exp(\mathcal O(\frac{1}{\varepsilon}))$.
\end{restatable*}

To summarize, several criteria come into play when designing an algorithm for
\textsc{Top-Agg}: approximation, running time, and simplicity, and there is a
trade-off between those. Our contribution is to explore the spectrum of existing
approximation algorithms for rank aggregation, and generalize them to map out
possible approximation algorithms for \textsc{Top-Agg}. 

\medskip
Now, remembering our initial motivations (applications to information retrieval),
one might notice that in some practical cases the total number of candidates is several
order of magnitude above the number of candidates ranked in each input top-list.
So far, we have focused on simplicity and on quality of approximation.
For the \textsc{Top-$k$-Agg} problem (when each input top-lists ranks $k$ candidates),
we can actually get both at the same time:
we design a very simple algorithm that is an efficient PTAS.
Intuitively, when $k$ is constant, for candidates that will be
ranked quite far, the average rank matters little and the score
is most important. Therefore sorting by score produces a near-optimal ranking,
except for the first few candidates, hence Algorithm~\algoeptas.

\begin{restatable*}[EPTAS for \textsc{Top-$k$-Agg}]{theorem}{theoremeptas}
\label{theorem:eptas}
For all fixed $\varepsilon > 0$, Algorithm \algoeptas\ is a $(1+\varepsilon)$-approximation algorithm
for \textsc{Top-$k$-Agg}. Its time complexity is $\mathcal O(n \log n + m \cdot 2^m$,
with $m := \lceil (1+\frac{1}{\varepsilon})(k-1)\rceil$.
\end{restatable*}

Thus, in addition to a variety of simple approximation algorithms, we provide two
approximation scheme, both solving an open problem stated in \cite{ailon2010aggregation}.

\subsection{Bucket orders.}
A further generalization of rank aggregation is obtained by letting the input
consist of bucket-orders (where a bucket-order is an ordered
partition of candidates into equivalence classes). This was considered by
\cite{fagin2006journal} and studied by Ailon \cite{ailon2010aggregation}
who gave two approximation algorithms: \textsc{RepeatChoice} and \textsc{KwikSort}.
The scenery of potential constant factor approximations for the
problem still remains to be done, and we leave the existence of an approximation
scheme for that generalization as an outstanding open problem.

\section{Definitions}
\label{section:defs}
Let $[n] := \{1, \dots, n\}$ be the set of candidates.

\begin{definition}[Full-ranking, Top-$k$-List, Top-list]
Let $k \in [n]$. A \emph{top-$k$-list} $\pi$, to each candidate $i\in [n]$, assigns a rank $\pi_i \in [k]\cup \{\infty\}$ such that  there is exactly one candidate of each rank $1,2,\ldots ,k$. A \emph{top-list} is a top-$k$-list for some $k$.

\medskip\noindent 
The set of top-$k$-lists is denoted $\mathfrak T_n^k$ and the set of top-lists
is denoted $\mathfrak T_n$. 

\medskip\noindent For $k=n$, a top-$n$-list is also called a \emph{full-ranking} and the set is
denoted $\mathfrak S_n$ (also called the set of permutations over $[n]$).

\medskip\noindent
A \emph{top candidate} is a candidate $i$ such that $\pi_i<\infty$.

\end{definition}

For example, if we have $n=8$ candidates and $k=3$ ranks with the gold, silver and bronze medals given to candidates $2,5$ and $1$ respectively, the corresponding top-3-list is written $\pi = [2,5,1; \dots]$.
This top-list can be represented as in Figure~\ref{figure:top3list}, with candidates listed by order of rank, and candidates with rank $\infty$ listed in arbitrary order.

\begin{figure}[h!]
  \begin{center}
    \begin{tikzpicture}[scale=.75]
      \draw (10,0) node {$\pi \in \mathfrak T_8^3$};
      \draw (0,-.1) -- (9,-.1);
      \draw (0,.1) -- (9,.1);
      \draw (0.0,-.1) -- (0.0, .1);
      \draw (1.5,-.1) -- (1.5, .1);
      \draw (3.0,-.1) -- (3.0, .1);
      \draw (4.5,-.1) -- (4.5, .1);
      \draw (0.75,.4) node {\footnotesize$\pi_2 = 1$};
      \draw (2.25,.4) node {\footnotesize$\pi_5 = 2$};
      \draw (3.75,.4) node {\footnotesize$\pi_1 = 3$};
      \draw (6.75,.4) node {\footnotesize$\pi_3,\pi_4,\pi_6,\pi_7,\pi_8=\infty$};
      \fill (0.75,0) circle (1.5pt);
      \fill (2.25,0) circle (1.5pt);
      \fill (3.75,0) circle (1.5pt);
      \fill (5.25,0) circle (1.5pt);
      \fill (6.0,0) circle (1.5pt);
      \fill (6.75,0) circle (1.5pt);
      \fill (7.5,0) circle (1.5pt);
      \fill (8.25,0) circle (1.5pt);
      \draw (0.75,-.5) node {\small$2$};
      \draw (2.25,-.5) node {\small$5$};
      \draw (3.75,-.5) node {\small$1$};
      \draw (5.25,-.5) node {\small$3$};
      \draw (6.0,-.5) node {\small$4$};
      \draw (6.75,-.5) node {\small$6$};
      \draw (7.5,-.5) node {\small$7$};
      \draw (8.25,-.5) node {\small$8$};
    \end{tikzpicture}
    \caption{Representation of a top-3-list $\pi = [2,5,1; \dots]$}
    \label{figure:top3list}
  \end{center}
\end{figure}

\begin{definition}[Kendall's tau distance]
\label{definition:kendalltau}
The \emph{generalized Kendall's tau distance} $K(\sigma, \pi)$ between a full-ranking $\sigma$ and a top-list $\pi$ is the number of pairs of candidates that are ranked in reverse order in $\sigma$ and in $\pi$, i.e.
\begin{equation*}
K(\sigma, \pi) := \sum_{i \in [n]}\sum_{j \in [n]} \mathbb 1_{\sigma_i > \sigma_j} \cdot \mathbb 1_{\pi_i < \pi_j}
\end{equation*}
where $\mathbb 1_P$ denotes the indicator function. When no candidate has rank $\infty$ (i.e. when $\pi$ is a full-ranking),
this definition coincides with Kendall's tau distance between two full-rankings.
\end{definition}

A pair $\{i,j\}$ of candidates that are tied in $\pi$ does not contribute to $K(\sigma, \pi)$.
Thus, considering the full-ranking $\tau$ which is a linear extension of $\pi$ where ties are broken according to $\sigma$, the generalized Kendall's tau distance between $\sigma$ and $\pi$ is exactly Kendall's tau distance between $\sigma$ and $\tau$. (We note that this is different from the distances discussed in~\cite{fagin2003conference, fagin2003journal, fagin2004conference}, where breaking ties incurs a non-zero cost.)

\begin{figure}[h!]
    \begin{tikzpicture}[scale=.75]
      \draw (8.5, 0) node[anchor=west] {$\sigma \in \mathfrak T_8^8 = \mathfrak S_8$};
      \draw (8.5, 1) node[anchor=west] {$\pi \in \mathfrak T_8^4$};
      \foreach \y in {-.1,.1,.9,1.1} {\draw (0,\y) -- (8.2,\y);}
      \foreach \x in {0,1,...,8}{\draw (\x,-.1) -- (\x,.1);}
      \foreach \x in {0,1,...,4}{\draw (\x,0.9) -- (\x,1.1);}

      \foreach \x in {.5,1.5,..., 7.5}{\fill (\x,0) circle (1.5pt);}
      \foreach \x in {.5,1.5,..., 7.5}{\fill (\x,1) circle (1.5pt);}
      \foreach \a/\b in {4.5/1.5,7.5/7.5,2.5/0.5,0.5/2.5,1.5/4.5, 
                         3.5/6.5,5.5/3.5,6.5/5.5}{
        \draw (\a,1) -- (\b,0);
      }
      \draw (3/2,1/2) circle (2pt);
      \draw (1.5+3/5,4/5) circle (2pt);
      \draw (1.5+3/5,1/5) circle (2pt);
      \draw (3,1/2) circle (2pt);
      \draw (4,5/6) circle (2pt);
      \draw (3.9,1/5) circle (2pt);
      \draw (4.7,0.6) circle (2pt);
      \draw (5.75,1/4) circle (2pt);
    \end{tikzpicture}
  \caption{Representation of the generalized Kendall's tau distance between $\pi$ and $\sigma$. Here $K(\sigma, \pi) = 8$, and the eight pairs that contribute to the cost are materialized by the eight circles.}
  \label{figure:kemeny}
\end{figure}

Using the graphical representation of a top-list where candidates with rank $\infty$ are listed using their order in $\sigma$, we can represent $K(\sigma, \pi)$ as in Figure~\ref{figure:kemeny}. Each candidate $i$ is associated to a line segment connecting the position of $i$ in the representation of $\pi$ and of $\sigma$, and each crossing pair $\{ i,j\}$  that contributes towards $K(\sigma, \pi)$ is marked by a small circle at the intersection of the two corresponding line segments.


\begin{definition}[Voting profile]
  A \emph{voting profile} is a distribution $p$ over top-lists. 
  The distance between a full-ranking $\sigma$ and a voting profile $p$ is the average distance
  between $\sigma$ and a top-list sampled from $p$.
  \begin{align*}
  K(\sigma, p) :=& \sum_{\pi \in \mathfrak T_n} p(\pi) \cdot K(\sigma, \pi)\\
   =& \sum_{i \in [n]}\sum_{j \in [n]} \mathbb 1_{\sigma_i > \sigma_j}
  \cdot p(\pi_i < \pi_j)\nonumber
  \end{align*}
  We denote by $p(E)=\sum_{\pi \in E} p(\pi)$ the probability of an event $E \subseteq \mathfrak T_n$. We also use the notation
  $p(\mathrm{Property~on~}\pi) := p(\{\pi \in \mathfrak T_n \;|\; \mathrm{Property~on~}\pi\})$.
\end{definition}

Equivalently, the reader may consider that a voting profile is a set of top-lists with weights.
The size of a voting profile is the sum of sizes of the top-lists in its support.

\begin{definition}[\textsc{Top-Agg} problem]
The top-list aggregation problem \textsc{Top-Agg} takes as input a set of candidates $[n]$
and a voting profile $p$, and outputs a full-ranking $\sigma$ of the $n$ candidates.
The goal is to minimize the distance $K(\sigma, p)$: the weighted average value of the generalized Kendall's tau distance
between $\sigma$ and a top-list $\pi$ from $p$.
\end{definition}


\begin{figure}[h]
Let $p$ be a voting profile such that:
\vspace{-.2cm}

\[\pi_1 = [3,5,1,7;\dots] \qquad p(\pi_1) = 1/10\]
\[\pi_2 = [3,1,4,5;\dots] \qquad p(\pi_2) = 2/10\]
\[\pi_3 = [4,1,5,2;\dots] \qquad p(\pi_3) = 3/10\]
\[\pi_4 = [6,1,2,3;\dots] \qquad p(\pi_4) = 4/10\]
The optimal solution is $\sigma^* = [1,2,3,4,5,6,7,8]$:
\vspace{-.2cm}
\[K(\sigma^*,p) = \frac{1}{10}\cdot 8 + \frac{2}{10} \cdot 4
+ \frac{3}{10} \cdot 5 + \frac{4}{10} \cdot 5 = 5.1 \]
\caption{Example of instance of \textsc{Top-$k$-Agg} with $k=4$ and $n=8$.}
\label{figure:instance}
\end{figure}
In Figure~\ref{figure:instance} we give an instance of \textsc{Top-Agg}
that will be reused in the next sections. Observe that $K(\sigma,\pi_1) = 8$
is represented in Figure~\ref{figure:kemeny}. The optimal solution ranks candidate
$1$ first, since he is preferred to every other candidate
(this property is known as Condorcet's criterion).

In the input top-lists of Figure~\ref{figure:instance}, observe that candidate $8$
is never a top candidate, so the optimal solution ranks it last. In the upcoming algorithms, we are often going to assume without loss of generality 
that no such candidates exist, since they may be eliminated in a preprocessing step.

\section{Generalized footrule algorithm}
\label{section:footrule}
\begin{algo}
{\textbf{Algorithm} \textsc{Footrule$+$} (Generalization of \textsc{Footrule})}
  \label{algo:footrule}
  \begin{algorithmic}
    \State {\bf Input: } instance $(n,p)$ of \textsc{Top-Agg}
    \State For each candidate $i \in [n]$ and each rank $j \in [n]$:
    \State\quad  Define the cost of putting $i$ at rank $j$ as 
    $C(i,j) := \sum_{r=1}^j (j - r) \cdot p(\pi_i = r)$.
    \State Use min-cost-perfect-matching to assign candidate $i$ to rank $\sigma_i$,
    minimizing $\sum_{i=1}^n C(i,\sigma_i)$.
    \State {\bf Output} the resulting full-ranking $\sigma$.
  \end{algorithmic}
\end{algo}

Spearman's footrule distance between two full-rankings $\sigma$ and $\tau$
is the sum of displacement of each candidate:
$F(\sigma, \tau) = \sum_{i=1}^n |\sigma_i - \tau_i|$.
Diaconis and Graham showed in \cite{diaconis1977spearman} that distances $K$ and $F$
are always within a constant factor of each other:
$K(\sigma,\tau) \leq F(\sigma, \tau) \leq 2 K(\sigma, \tau)$.
Thus, approximating with respect to one distance also yields an approximation with respect to the other distance. 

Dwor, Kumar, Naor and Sivakumar noticed this fact in \cite{dwork2001rank},
and proved that minimizing $F$ can be done in polynomial; which yields a
2-approximation for full-ranking aggregation with $K$.
The algorithm computes the cost induced by ranking candidate $i$ at rank
$j$, then uses a minimum-cost-perfect-matching algorithm
to assign candidates to ranks.
Algorithm~\algofootrule\ is a generalization of the approach.

Using Algorithm~\algofootrule\ on the instance from Figure~\ref{figure:instance},
we obtain a full-ranking $\sigma = [4,1,2,3,5,6,7,8]$
which is at a distance $K(\sigma, p) = 5.8$ from $p$.
Observe that candidate $1$ is ranked second
instead of first, which would have been optimal with respect to $K$.

\thmfootrule 

\begin{proof}
Let $(n,p)$ be an instance of \textsc{Top-Agg},
and let $\sigma$ the output of Algorithm~\algofootrule.

To define a generalized version of Spearman's footrule between a
full-ranking $\sigma$ and a top-list $\pi$, we use the linear extension
$\tau$ of $\pi$ in which ties are broken according to $\sigma$:
$F(\sigma, \pi) := F(\sigma, \tau) =  \sum_{i=1}^n |\sigma_i - \tau_i|$.
As noticed in section~\ref{section:defs}, we also have 
$K(\sigma, \pi) = K(\sigma, \tau)$.
Thus the property of full-rankings from \cite{diaconis1977spearman} still holds for $\pi$ a top-list:
$K(\sigma,\pi) \leq F(\sigma, \pi) \leq 2 K(\sigma, \pi)$.
Letting $F(\sigma,p) := \sum_{\pi} p(\pi) F(\sigma, \pi)$ we have:
$K(\sigma,p) \leq F(\sigma,p) \leq 2 F(\sigma,p)$. 

\begin{figure}[h!]
\vspace{-.2cm}
  \begin{tikzpicture}[scale=.75]
    \draw (8.5, 0) node[anchor=west] {$\sigma \in \mathfrak T_8^8 = \mathfrak S_8$};
    \draw (8.5, 1) node[anchor=west] {$\pi \in \mathfrak T_8^4$};
    \draw (8.5, 2) node[anchor=west] {$\tau \in \mathfrak T_8^8 = \mathfrak S_8$};
    \foreach \y in {-.1,.1,.9,1.1,1.9,2.1} {\draw (0,\y) -- (8.2,\y);}
    \foreach \x in {0,1,...,8}{\draw (\x,1.9) -- (\x,2.1);}
    \foreach \x in {0,1,...,8}{\draw (\x,-.1) -- (\x,.1);}
    \foreach \x in {0,1,...,4}{\draw (\x,0.9) -- (\x,1.1);}

    \foreach \x in {.5,1.5,..., 7.5}{\fill (\x,0) circle (1.5pt);}
    \foreach \x in {.5,1.5,..., 7.5}{\fill (\x,1) circle (1.5pt);}
    \foreach \x in {.5,1.5,..., 7.5}{\fill (\x,2) circle (1.5pt);}
    \foreach \a/\b in {4.5/1.5,7.5/7.5,2.5/0.5,0.5/2.5,1.5/4.5, 
                       3.5/6.5,5.5/3.5,6.5/5.5}{
      \draw (\a,2) -- (\a,1) -- (\b,0);
    }
    \draw[line width=1.5pt,->,>=stealth] (0.5,1) -- (2.5,0);
    \draw[line width=1.5pt,->,>=stealth] (1.5,1) -- (4.5,0);
    \draw[line width=1.5pt,->,>=stealth] (3.5,1) -- (6.5,0);
  \end{tikzpicture}
  \vspace{-.6cm}
\caption{Representation of the generalized Spearman's footrule distance between $\pi$ and $\sigma$. Here $F(\sigma, \pi) = 16$, and the three candidates that contribute to the cost are materialized by three arrows.}
\vspace{-.2cm}
\label{figure:footrule}
\end{figure}

To generalize the footrule algorithm from \cite{dwork2001rank},
we need to express  $F(\sigma, \pi)$ as a sum over $i\in [n]$ of the cost
of putting candidate $i$ at rank $\sigma_i$. We first observe that the sum
of displacements in one direction is equal to the sum of displacements in the other,
 thus 
$F(\sigma, \tau) = 2 \sum_{i=1}^n  (\sigma_i - \tau_i) \cdot \mathbb 1_{\tau_i < \sigma_i}$.
Note that if $\pi_i < \infty$ then $\tau_i = \pi_i$;
and if $\pi_i = \infty$, then $\tau_i \geq \sigma_i$.
Thus $\mathbb 1_{\tau_i < \sigma_i} = \mathbb 1_{\pi_i < \sigma_i}$ and 
$F(\sigma, \pi) = 2 \sum_{i=1}^n (\sigma_i - \pi_i) \cdot \mathbb 1_{\pi_i < \sigma_i}$. Hence:
\begin{align*}
F(\sigma,p) &=
\sum_{\pi \in \mathfrak T_n} p(\pi) F(\sigma, \pi)\\
\nonumber &= 2 \sum_{i \in [n]} \sum_{\pi \in \mathfrak T_n}  p(\pi) \cdot (\sigma_i - \pi_i) \cdot \mathbb 1_{\pi_i < \sigma_i}\\
\nonumber &= 2 \sum_{i \in [n]} \sum_{k=1}^{\sigma_i} p(\pi_i = k) \cdot (\sigma_i - k) 
\end{align*}
Because of that, Algorithm~\algofootrule\ is able to optimize $F(\sigma,p)$ by solving a min-cost-perfect-matching problem.
For any full-ranking $\sigma^*$ we have $\cost{\sigma}{p} \leq F(\sigma,p) \leq F(\sigma^*,p) \leq 2 \cost{\sigma^*}{p}$.
Hence, Algorithm~\algofootrule\ is a 2-approximation for \textsc{Top-Agg}.
The time complexity is the time complexity of the Hungarian algorithm, which
computes a minimum-weight-perfect-matching.
\end{proof}

\section{Scores and average ranks}
\label{section:repeatchoice}
In this section we introduce the \emph{scores} and \emph{average ranks} of candidates.
Those two parameters are central to the problem of top-list aggregation.

When aggregating full-rankings, it is folklore that outputting a full-ranking
randomly sampled from the input gives an expected 2-approximation. 
Ailon generalized this into design algorithm
\textsc{RepeatChoice}~\cite{ailon2010aggregation}, which is a 2-approximation
in the more general setting of bucket-order aggregation.
Algorithm \algorandomsort\ below is algorithm \textsc{RepeatChoice} specialized to \textsc{Top-Agg} and  reinterpreted using exponential
random variables.

\begin{algo}{\textbf{Algorithm} \textsc{RandomSort}}
  \label{algo:randomsort}
  \begin{algorithmic}
    \State {\bf Input: } instance $(n,p)$ of \textsc{Top-Agg}
    \State For each top-lists $\pi$ of the voting profile $p$:
    \State\quad Draw a real value $X_\pi$ from an exponential
    distribution of parameter $p(\pi)$.
    \State For each candidate $i$ in $[n]$:
    \State\quad Consider tuples $(X_\pi, \pi_i)$ with $\pi$ such that $i$ is a top candidate.
    \State\quad Choose $t_i$ to be the one with smallest value of $X_\pi$. 
    \State Build a full ranking $\sigma$, sorting the candidates using the lexicographical order over the $t_i$'s. 
    \State {\bf Output} $\sigma$.
  \end{algorithmic}
\end{algo}

For example, if we take the instance from Figure~\ref{figure:instance},
Algorithm \algorandomsort\ randomly orders the top-lists $\pi_1$,
$\pi_2$, $\pi_3$ and $\pi_4$,
by sorting top-lists by increasing order of their values $X_\pi$.
With probability $4/35 = 3/10 \cdot 4/7 \cdot 2/3$, the ordering is
$x_{\pi_3} < x_{\pi_4} < x_{\pi_2} < x_{\pi_1}$.
Observe that sorting candidates (by the values of their tuples) is equivalent
to processing the top-lists in order, appending candidates sequentially :
from $\pi_3$, we append candidates $4,1,5,2$;
then from $\pi_4$ we append candidates $6,3$;
then from $\pi_2$ we append no candidate;
then from $\pi_1$ we append candidate $7$;
then we append $8$ who is the only remaining candidate.
The resulting full-ranking $\sigma = [4,1,5,2,6,3,7,8]$ is at a distance
$K(\sigma, p) = 5.9$ from $p$.
This algorithm is a 2-approximation, but observe that candidate 1 is
never ranked first (even thought that would have been optimal).

\thmrandomsort 

\begin{proof}
  The time complexity is studied in the standard randomized real RAM model.
  Let $i$ and $j$ be two distinct candidates, each appearing at least once as
  a top candidate in the input voting profile.
  We compute the probability (over the values of the $X_\pi$)
  that $\sigma_i > \sigma_j$.
  Let $I = \min\{X_\pi:\pi_i < \pi_j\}$ and $J = \min\{X_\pi:\pi_j < \pi_i\}$
  be the minimum values of the exponential random variables over the sets
  of top-lists which respectively prefers $i$ to $j$ and $j$ to $i$.
  Observe that $\sigma_i > \sigma_j$ if and only if $I > J$.
  As the minimum of several exponential random variables is an exponential random
  variable with a parameter equal to the sum of parameters, $I$ and $J$ are two
  independent exponential random variables of parameters $p(\pi_i < \pi_j)$
  and $p(\pi_j < \pi_i)$. 
  Thus the probability that $\sigma_i > \sigma_j$ is
  $\mathbb P(\sigma_i > \sigma_j) =
  p(\pi_i > \pi_j) / p(\pi_i \neq \pi_j)$.
  We now compute the expected cost of the output $\sigma$.
  \begin{align*}
  \mathbb E(\cost{\sigma}{p}) &= 
  \sum_{(i,j) \in [n]^2} p(\pi_i < \pi_j) \cdot
  \mathbb E(\mathbb 1_{\sigma_i > \sigma_j})\\
  \nonumber &\leq 2 \sum_{\{i,j\}\subseteq [n]}\min \left\{\begin{array}{l}p(\pi_i < \pi_j)\\p(\pi_i > \pi_j)\end{array}\right.
  \end{align*}
  Let $\sigma^*$ denote the optimal solution. For all
  distinct  $i,j \in [n]$, 
  $\sigma^*$ must rank $i$ before $j$ or $j$ before $i$, 
  which costs at least the minimum between $p(\pi_i < \pi_j)$ and $p(\pi_i > \pi_j)$.
  Therefore $\mathbb E(\cost{\sigma}{p}) \leq 2\cost{\sigma^*}{p}$.
\end{proof}

Observe that in Algorithm \algorandomsort\ 
for any candidate $i$, the expected value of his tuple can be computed  easily.
Indeed,  the first coordinate of his tuple is the minimum of several
exponential random variables (all $X_\pi$ such that $i$ is a top candidate in $\pi$);
thus it is an exponential random variable whose parameter is $p(\pi_i < \infty)$.
As for the second coordinate,
we can easily compute the probability that an exponential random variable $X_\pi$
is smaller than all the exponential random variables of top-lists having $i$
as a top candidate: this probability is $p(\pi)/p(\pi_i < \infty)$, which
directly gives the expected value of the second coordinate.
\[\mathbb E[t_i] = \left(\frac{1}{p(\pi_i < \infty)},\; \sum_{r=1}^n \frac{p(\pi_i = r)}{p(\pi_i < \infty)} \cdot r \right )\]

From this observation we define the \emph{score} and \emph{average rank} of a candidate.
The score is known in the literature as the \emph{approval score} under a voting profile
that ignores the ordering between top candidates (in the setting where, instead of ranking top candidates, each voter gives a subset of approved candidates).

\begin{definition}[Score, Average rank]
Given a voting profile $p$, the \emph{score} of a candidate $i \in [n]$ is the
probability that she is a top candidate:
$\score{i}{p} := p(\pi_i < \infty)$.  Assuming that each candidate
appears at least once as a the top candidates in the input, 
The \emph{average rank} of a candidate is her expected rank,
conditioning on her being a top candidate:
$\rank{i}{p} := \sum_{r=1}^n \frac{p(\pi_i = r)}{\score{i}{p}} \cdot r$.
\end{definition}

\begin{figure}[h]
\[\small\begin{array}{|c|cccccccc|}\hline
i & 1 & 2 & 3 & 4 & 5 & 6 & 7 & 8 \\\hline
\score{i}{p} & 10/10 & 7/10 & 7/10 & 5/10 & 6/10 & 4/10 & 1/10 & 0/10 \\\hline
\rank{i}{p} & 21/10 & 24/7 & 19/7 & 9/5 & 19/6 & 4/4 & 4/1 & - \\\hline
\end{array}\]
\caption{Scores and average ranks of candidates in the instance
from Figure~\ref{figure:instance}.}
\end{figure}

\section{Generalized Borda's algorithm}
\label{section:borda}
\begin{algo}
{\textbf{Algorithm} \textsc{Borda$+$} (Generalization of \textsc{Borda})}
  \label{algo:borda}
  \begin{algorithmic}
    \State {\bf Input: } instance $(n,p)$ of \textsc{Top-Agg}
    \State For each candidate $i \in [n]$, compute
    $\rank{i}{p} \gets \sum_{r=1}^n \frac{p(\pi_i = r)}{p(\pi_i < \infty)} \cdot r$
    \State Sort candidates by increasing value of $\rank{i}{p}$.
    \State {\bf Output} the resulting full-ranking $\sigma$.
  \end{algorithmic}
\end{algo}

In this section, we draw inspiration from two noteworthy papers that study 
the approximation ratios of simple algorithms for full-ranking aggregation.
In \cite{coppersmith2010ordering}, Coppersmith, Fleischer and Rudra proved
that Borda's method is a 5-approximation.
In \cite{fagin2016algorithmic}, Fagin, Kumar, Mahdian, Sivakumar and Vee designed 
a general framework to prove constant factor approximation bounds.

In Borda's method for full-ranking aggregation, a candidate ranked in $r$-th
position by an input ranking gets $n-r$ points, and then candidates are sorted by total number of points. This is equivalent to sorting candidates by increasing average ranks.
Thus Algorithm~\algoborda\ can be seen as a generalization of Borda's method to \textsc{Top-Agg},
where the analysis uses insights from \cite{coppersmith2010ordering} to bound the
approximation ratio when the scores of candidates are within a constant factor of each other.

Let us give an example of execution of Algorithm~\algoborda,
using the instance from Figure~\ref{figure:instance}. (Candidate $8$ never appears as a top candidate so it is ranked last
in the output). 
Sorting  the instance from Figure~\ref{figure:instance} by average ranks produces the full-ranking $[6,4,1,3,5,2,7,8]$,
which is at distance $\cost{\sigma}{p} = 6.3$ from $p$.

\thmborda 

\begin{proof}
Let $(n,p)$ be an instance of \textsc{Top-Agg},
 $\sigma$ be the output of Algorithm~\algoborda,
and let $\sigma^*$ be the full-ranking minimizing $\cost{\sigma^*}{p}$.
We define $F(\sigma,p)$ as in the proof of Theorem~\ref{theorem:footrule}.
To simplify notations, we also define the positive part function
$x \mapsto x^+ = x \cdot \mathbb 1_{x > 0}$.
\begin{equation*}
F(\sigma,p) = 2\sum_{i=1}^n \sum_{k=1}^n  p(\pi_i = k) \cdot (\sigma_i - k)^+
\end{equation*}
We have the triangle inequality:
for all $x,y \in \mathbb R$, $(x+y)^+ \leq x^+ + y^+$.
Thus for all $i,r$ we have
$(\sigma_i - r)^+ \leq (\sigma_i^* - r)^+ + (\sigma_i - \rank{i}{p})^+ + (\rank{i}{p} - \sigma_i^*)^+$.
Recalling that $\score{i}{p} = \sum_{r=1}^n p(\pi_i = r)$, we obtain an upper bound on $F(\sigma,p)$:
\begin{align*}
\textstyle
F(\sigma,p) \leq F(\sigma^*,p)
&+ 2\sum_{i=1}^n \score{i}{p} \cdot (\sigma_i - \rank{i}{p})^+\\
&+ 2\sum_{i=1}^n \score{i}{p} \cdot (\rank{i}{p} - \sigma_i^*)^+
\end{align*}
One can prove 
(\textit{e.g.} Lemma 3.5 from \cite{coppersmith2010ordering})
that sorting by increasing average rank minimizes
$\sum_{i=1}^n (\sigma_i - \rank{i}{p})$.
Using the fact that the scores are all within a factor $\alpha$ of each others, we have:
\begin{align*}
&\sum_{i=1}^n \score{i}{p} \cdot (\sigma_i - \rank{i}{p})^+ \\
&\leq
\underbrace{
\left(\max_{i\in[n]}\score{i}{p}\right)
}_{\leq \alpha\min_{i\in[n]} \score{i}{p}}\cdot
\underbrace{
\sum_{i=1}^n (\sigma_i - \rank{i}{p})^+
}_{\leq \sum_{i=1}^n (\sigma^*_i - \rank{i}{p})^+}
\\&\leq \alpha\sum_{i=1}^n \score{i}{p} \cdot (\sigma^*_i - \rank{i}{p})^+
\end{align*}
Using this inequality to bound $F(\sigma, p)$, we obtain:
\[
F(\sigma,p) \leq F(\sigma^*,p)
+ 2\alpha\sum_{i=1}^n \score{i}{p} \cdot |\sigma_i^* - \rank{i}{p}|
\]
For all $i \in [n]$, we use the convexity of $x \mapsto |\sigma_i^* - x|$
and the definition $\rank{i}{p} = \sum_{k=1}^n \frac{p(\pi_i = k)}{\score{i}{p}} \cdot k$.
\begin{align*}
\sum_{i=1}^n \score{i}{p} \cdot |\sigma_i^* - \rank{i}{p}|
&\leq \sum_{i=1}^n\sum_{k=1}^n p(\pi_i = k) \cdot |\sigma_i^* - k|
\\&\leq F(\sigma^*,p)
\end{align*}
Combining the last two inequalities, we obtain $F(\sigma,p) \leq (1+2\alpha) F(\sigma^*,p)$.
Using the relation between $F$ and $K$, we conclude with
$\cost{\sigma}{p} \leq F(\sigma,p) \leq (1+2\alpha) F(\sigma^*,p) \leq (2+4\alpha) \cost{\sigma^*}{p}$.
\end{proof}

\paragraph{Tightness.} We notice that 
\algoborda is an $\Omega(\alpha)$ approximation in the worst case: let $n=2$ and let $p$ such that
$p([1;\dots]) = 0.999$ and $p([2,1;\dots]) = 0.001$; the optimal solution is
$[1,2]$ and costs $0.001$ whereas sorting by average ranks produces $[2,1]$ which
costs $0.999$. Thus, in general, \algoborda\ is not a $\mathcal O(1)$-approximation algorithm.

\medskip
Observe that sorting by
decreasing scores is not  a $\mathcal O(1)$-approximation algorithm either: let $n = 2$ and let $p$ such that $p([1,2;\dots]) = 0.999$ and $p([2;\dots]) = 0.001$;
the optimal solution is $[1,2]$ and costs $0.001$ whereas sorting by scores produces $[2,1]$ which
costs $0.999$. 

\medskip
However, in the next section we show that sorting first by
decreasing scores, then by increasing average ranks, yields an $\mathcal O(1)$-approximation algorithm

\section{Combining approval and Borda's methods}
\label{section:sorting}
\begin{algo}{\textbf{Algorithm} \textsc{Score-then-Borda$+$}}
  \label{algo:sorting}
  \begin{algorithmic}
      \algblockdefx{Indent}{EndIndent}[1]{{\bf #1}}{}\algtext*{EndIndent}
      \State {\bf Input: } an instance $(n,p)$ of \textsc{Top-Agg}
      \Indent{Step 1, partition candidates into intervals:}
        \State $u \gets$ uniformly random value on $[0,1)$.
        \ForAll{candidate $i \in [n]$}
          \State Compute $\score{i}{p} \gets p(\pi_i < \infty)$.
          \State Set $t \gets \lfloor u - \ln (\score{i}{p})\rfloor$ and put candidate $i$ in interval $E_t$.
        \EndFor
      \EndIndent
      \Indent{Step 2, solve the problem in each interval:}
        \ForAll{$t \in \mathbb N \cup \{\infty\}$ such that $E_t$ is non-empty}
          \State Order $E_t$ sorting candidates $i$ by average rank
          $\rank{i}{p} \gets \sum_{r=1}^n \frac{p(\pi_i = r)}{p(\pi_i < \infty)} \cdot r$.
        \EndFor
      \EndIndent
      \State Concatenate the ranking of $E_0$, ranking of $E_1$, $\dots$, and ranking of $E_\infty$.
      \State {\bf Output} resulting full-ranking.
  \end{algorithmic}
\end{algo}

In the previous section, we saw that when all scores are within a constant
factor of each other, then sorting by average rank yields a constant factor
approximation. In this section we argue that we can always do an approximate sort of the candidates
using rough scores, and then obtain a
constant factor approximation. This statement is made
more precise in Lemma~\ref{lemma:partition}, and used in Theorem~\ref{theorem:sorting}
to prove that Algorithm~\algosorting\ is a constant factor approximation.

Let us give an example of the execution of Algorithm~\algosorting,
using the instance from Figure~\ref{figure:instance}. 
In the first step, we sample a random value $u$ from [0,1), for example $u = 0.4$,
and use this value to define thresholds on the scores:
\begin{itemize}
\item a candidate $i$ such that
$0.55 \approx \exp(u-1) \leq \score{i}{p}$
will go in interval $E_0$;
\item a candidate $i$ such that
$0.20 \approx \exp(u-2) \leq \score{i}{p} <
\exp(u-1) \approx 0.55$
will go in interval $E_1$;
\item a candidate $i$ such that
$0.07 \approx \exp(u-3) \leq \score{i}{p} <
\exp(u-2) \approx 0.20$
will go in interval $E_2$;
\item and a candidate $i$ such that
$\score{i}{p} = 0$ will go in interval $E_\infty$.
\end{itemize}
At the end of the first step we have $E_0 = \{1,2,3,5\}$, $E_1 = \{4,6\}$,
$E_2 = \{7\}$ and $E_\infty = \{8\}$.
In the second step, we reorder candidates by increasing average ranks:
the ordering of $E_1$ is $[1,3,5,2]$; the ordering of $E_1$ is $[6,4]$;
the ordering of $E_2$ is $[7]$; the ordering of $E_\infty$ is $[8]$.
Finally, we concatenate the rankings of $E_0$, $E_1$, $E_2$ and $E_\infty$.
We obtain a full-ranking $\sigma = [1,3,5,2,6,4,7,8]$ which is at a distance
$\cost{\sigma}{p} = 5.8$ from $p$.

\thmsorting 

\begin{lemma}
\label{lemma:partition}
  Consider a constant $\eta > 0$ and an instance $(n,p)$ of \textsc{Top-Agg}.
  Sample a random variable $u$ uniformly at random from $[0,1)$.
  Define a partition function $f: s \mapsto  \lfloor u - \eta \ln (s)\rfloor$.
  A full-ranking $\sigma$ respects the partition if for any two candidates $i$ and $j$,
  having $f(\score{i}{p}) < f(\score{j}{p})$ implies that $\sigma_i < \sigma_j$.
  The expected cost of the best full-ranking that respects the partition
  is at most $(1+\eta)$ times the cost of the overall optimal full-ranking.
\end{lemma}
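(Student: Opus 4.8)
The plan is to exhibit one full-ranking that respects the partition and whose expected cost (over the random shift $u$) is at most $(1+\eta)\cost{\sigma^*}{p}$, where $\sigma^*$ is an overall optimal full-ranking; since the best respecting ranking can only be cheaper for every $u$, this proves the lemma. The candidate ranking is the natural one: sort the candidates by their bucket index $f(\score{i}{p})$ and break ties \emph{inside} a bucket according to the order of $\sigma^*$. Call it $\sigma$. By construction $\sigma$ respects the partition, and it differs from $\sigma^*$ only on pairs that land in distinct buckets and whose $\sigma^*$-order disagrees with the bucket order (recall higher score means a smaller bucket index, hence an earlier position). Zero-score candidates all fall into $E_\infty$, are ranked last, and create no disagreement. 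So the whole argument reduces to bounding the expected extra cost on the ``bad'' pairs by $\eta\cost{\sigma^*}{p}$.

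First I would set up the per-pair accounting. Fix a pair $\{i,j\}$ with, say, $\score{i}{p}\ge\score{j}{p}$. The two candidates land in different buckets exactly when $\lfloor u-\eta\ln\score{i}{p}\rfloor\ne\lfloor u-\eta\ln\score{j}{p}\rfloor$, and a direct computation for $u$ uniform on $[0,1)$ shows this probability equals $\min\!\left(1,\ \eta\ln(\score{i}{p}/\score{j}{p})\right)$, the length (capped at $1$) of the gap between the two continuous positions $u-\eta\ln\score{\cdot}{p}$. If $\sigma^*$ already ranks $i$ before $j$ (consistent with the bucket order), the pair costs the same under $\sigma$ and $\sigma^*$, so only a bad pair where $\sigma^*$ ranks $j$ before $i$ can contribute. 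Writing $a=p(\pi_i<\pi_j)$ and $b=p(\pi_i>\pi_j)$, such a pair costs $a$ under $\sigma^*$, and when it is separated its cost changes to $b$; hence its expected extra cost is $\min\!\left(1,\ \eta\ln(\score{i}{p}/\score{j}{p})\right)(b-a)^+$.

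The heart of the proof, and the main obstacle, is the single per-pair inequality
\[
\min\!\left(1,\ \eta\ln\tfrac{\score{i}{p}}{\score{j}{p}}\right)(b-a)\ \le\ \eta\,a ,
\]
which is trivial when $b\le a$ and, when summed over all bad pairs, yields expected extra cost at most $\eta\cost{\sigma^*}{p}$ and therefore $\mathbb E[\cost{\sigma}{p}]\le(1+\eta)\cost{\sigma^*}{p}$. Proving it rests on two elementary facts about scores: since $\pi_i>\pi_j$ forces $j$ to be a top candidate, $b\le\score{j}{p}$; and since the event ``$i$ top, $j$ not top'' is contained in $\{\pi_i<\pi_j\}$ and has probability at least $\score{i}{p}-\score{j}{p}$, we get $\score{i}{p}-\score{j}{p}\le a$. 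I would then split on whether the separation probability saturates. When $\eta\ln(\score{i}{p}/\score{j}{p})<1$, bound $\ln(\score{i}{p}/\score{j}{p})\le(\score{i}{p}-\score{j}{p})/\score{j}{p}\le a/\score{j}{p}$ and $b-a\le b\le\score{j}{p}$, so their product is at most $a$. When the probability saturates at $1$, one has $\score{i}{p}\ge\score{j}{p}\,e^{1/\eta}$, hence $a\ge\score{i}{p}-\score{j}{p}\ge\score{j}{p}(e^{1/\eta}-1)$, and the required $b\le(1+\eta)a$ follows from $b\le\score{j}{p}$ together with $(1+\eta)(e^{1/\eta}-1)\ge(1+\eta)/\eta\ge1$, using only $e^{x}\ge 1+x$. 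The genuinely creative step is isolating this per-pair inequality and the two score estimates that power it; once they are in hand the verification is pure bookkeeping.
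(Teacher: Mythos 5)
Your proof is correct and follows essentially the same route as the paper's: the same respecting ranking (bucket order with ties broken inside buckets by $\sigma^*$), the same per-pair accounting, and the same two score inequalities ($p(\pi_j<\pi_i)\le \score{j}{p}$ and $\score{i}{p}-\score{j}{p}\le p(\pi_i<\pi_j)$) combined with $\ln x \le x-1$ to bound each bad pair's expected extra cost by $\eta$ times its cost under $\sigma^*$. The only cosmetic difference is your exact $\min(1,\cdot)$ separation probability and the resulting saturation case split, which is unnecessary: the paper simply upper-bounds the probability by $\eta\ln\bigl(\score{i}{p}/\score{j}{p}\bigr)$ whether or not this exceeds one, and your Case 1 chain of inequalities then already covers both cases.
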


\begin{proof}
For all $t \in \mathbb N$ we define $E_t$ to be the set of candidates
that are sent in the $t$-th interval by the partition function.
Let $\sigma^*$ be an optimal solution and let $\sigma'$ be the full-ranking
which is closest to $\sigma^*$ and respects the partition.
More precisely, for all $t \in \mathbb N$, the full-ranking $\sigma^*$ induces
an ordering of the candidates from $E_t$;
we build $\sigma'$ as a concatenation of those rankings.
The cost of the best full-ranking that respects the partition is smaller than $\cost{\sigma'}{p}$.
From the definition of cost, we have:
\begin{align*}
&\cost{\sigma'}{p} - \cost{\sigma^*}{p} =\\
&\sum_{i \in [n]} \sum_{j \in [n]} \mathbb 1_{\sigma'_i > \sigma'_j} \cdot \mathbb 1_{\sigma^*_i < \sigma^*_j} \cdot \Big(\underbrace{
p(\pi_i < \pi_j) - p(\pi_j < \pi_i)}_{\text{smaller than } \score{i}{p}}
\Big)
\end{align*}
Let $i$ and $j$ be two candidates such that $\sigma^*_i < \sigma^*_j$.
Observe that having $\sigma'_i > \sigma'_j$ implies that $\score{i}{p} < \score{j}{p}$;
thus we assume the later.
We are going to compute the probability (over the randomness $u$) that $\sigma'_i > \sigma'_j$.
Candidates $i$ and $j$ are not in the same interval if and only if
\[\exists t \in \mathbb N,\quad
t + \eta\ln(\score{i}{p}) \leq u < t + \eta\ln(\score{j}{p})\]
This happens with probability at most $\eta\ln(\score{j}{p}/\score{i}{p})
\leq \eta(\score{j}{p}/\score{i}{p} - 1)$.
\\Hence:
\begin{align*}
&\mathbb E_u[\cost{\sigma'}{p}] - \cost{\sigma^*}{p} \\
&\leq \sum_{i \in [n]} \sum_{j \in [n]}
\mathbb 1_{\sigma^*_i < \sigma^*_j} \cdot \mathbb E_u[\mathbb 1_{\sigma'_i > \sigma'_j}\cdot \score{i}{p}]\\
&\leq \eta\sum_{i \in [n]} \sum_{j \in [n]}
\mathbb 1_{\sigma^*_i < \sigma^*_j}
\cdot\underbrace{(\score{j}{p} - \score{i}{p})^+}
_{\text{smaller than }p(\pi_j < \pi_i)}\\
&\leq \eta\cost{\sigma^*}{p}
\end{align*}
Observe that $\score{j}{p} - \score{i}{p}$ is a lower bound on the
weight of top-lists for which $j$ is a top candidate but $i$ is not.
We recognize a lower bound on the cost of $\sigma^*$, thus $\mathbb E_u[\cost{\sigma'}{p}] \leq (1+\eta) \cost{\sigma^*}{p}$.
\end{proof}

\begin{proof}\emph{(Theorem~\ref{theorem:sorting})}
  Let $(n,p)$ be an instance of $\textsc{Top-Agg}$, let $\sigma^*$ be an optimal solution
  and let $\sigma$ be the output of Algorithm \algosorting.
  The proof of this theorem is in two parts, corresponding to the two steps of
  the algorithm.
  
  Firstly, let $u$ be the random variable sampled during the first step,
  and let $\sigma'$ be the best full-ranking that respects the partition.
  From Lemma~\ref{lemma:partition} with $\eta = 1$, we have
  $\mathbb E_u[\cost{\sigma'}{p}] \leq 2 \cost{\sigma^*}{p}$.
  
  Secondly, we reuse the proof of Theorem~\ref{theorem:borda},
  with some additional details: every full-ranking that we consider needs to
  respect the partition (hence we replace every instance of $\sigma^*$ by $\sigma'$).
  On every interval, the ratio between the largest and smallest score is upper-bounded
  by $\alpha = e$; thus we have $\cost{\sigma}{p} \leq (4e+2) \cost{\sigma'}{p}$.
  
  Combining both parts, we obtain that Algorithm \algosorting\ is
  a randomized $(8e+4)$ approximation. Note that we did not try to optimize
  the approximation ratio.
\end{proof}

\section{PTAS for top-list aggregation}
\label{section:ptas}
In the case of full-ranking aggregation, \cite{kenyon2007rank, mathieu2009rank}
show that there is a PTAS. The approximation scheme with the best running time
is algorithm \textsc{FASTer-Scheme} from \cite{mathieu2009rank}.
Rephrasing it to the setting of top-list aggregation,
it requires that all candidates are compared a similar number of times
(Theorem~\ref{theorem:fastlight} makes this statement more precise).
We notice that this condition is equivalent with having all the scores within
a constant factor of each other; therefore we can use the techniques from
the previous section.

\smallskip
At a high level, Algorithm~\algoptas\ starts by fixing thresholds on the
scores of candidates (exactly as Algorithm~\algosorting\ does), to partition
candidates into intervals. Then it uses \textsc{FASTer-Scheme} as a black-box,
to find a nearly-optimal solution on each interval.
We show in Theorem~\ref{theorem:ptas} that Algorithm~\algoptas\ is a PTAS
for top-list aggregation.

\begin{algo}{\textbf{Algorithm} \textsc{Score-then-PTAS} with error parameter $\varepsilon > 0$.}
  \label{algo:ptas}
  \begin{algorithmic}
      \algblockdefx{Indent}{EndIndent}[1]{{\bf #1}}{}\algtext*{EndIndent}
      \State {\bf Input: } an instance $(n,p)$ of \textsc{Top-Agg}
      \Indent{Step 1, partition candidates into intervals:}
        \State $u \gets$ uniformly random value on $[0,1)$.
        \ForAll{candidate $i \in [n]$}
          \State Compute $\score{i}{p} \gets p(\pi_i < \infty)$.
          \State Set $t \gets \lfloor u - (\varepsilon/3) \ln (\score{i}{p})\rfloor$ and put candidate $i$ in interval $E_t$.
        \EndFor
      \EndIndent
      \Indent{Step 2, solve the problem in each interval:}
        \ForAll{$t \in \mathbb N \cup \{\infty\}$ such that $E_t$ is non-empty}
          \State $p_t \gets \text{restriction of input top-lists to }E_t$
          \State Order $E_t$ using \textsc{FASTer-Scheme} \cite{mathieu2009rank} on instance $p_t$ with error parameter $\varepsilon/3$.
        \EndFor
      \EndIndent
      \State Concatenate the ranking of $E_0$, ranking of $E_1$, $\dots$, and ranking of $E_\infty$.
      \State {\bf Output} resulting full-ranking.
  \end{algorithmic}
\end{algo}

Let us give an example of execution of Algorithm~\algoptas\  
with $\varepsilon = 3$,
using the instance from Figure~\ref{figure:instance}.
In the first step, we sample a random value $u$ from [0,1), for example $u = 0.4$.
Observe that we chose $\varepsilon$ and $u$ such that  the partition in interval
is the same as in Algorithm \algosorting:
$E_0 = \{1,2,3,5\}$, $E_1 = \{4,6\}$, $E_2 = \{7\}$ and $E_\infty = \{8\}$.
In the second step, we find an approximate solution of the optimal ordering of every non-empty interval.
For $E_0$, we build the restriction $p_0$ of the voting profile $p$ on 
candidates from $E_0$: here $\pi_1 = [3,5,1;\dots]$, $\pi_2 = [3,1,5;\dots]$,
$\pi_3 = [1,5,2;\dots]$ and $\pi_4 = [1,2,3;\dots]$. Then we use the algorithm
\textsc{FASTer-Scheme} to find a $(1+\varepsilon/3)$
approximation of the optimal solution for $p_0$, and get (for example)
the ranking $[1,2,3,5]$. We do the same for $E_1$, $E_2$ and $E_\infty$
and get (for example) the rankings $[4;6]$, $[7]$ and $[8]$.
Finally, we concatenate the rankings of $E_0$, $E_1$, $E_2$ and $E_\infty$.
We obtain a full-ranking $\sigma = [1,2,3,5,4,6,7,8]$ which is at a distance
$\cost{\sigma}{p} = 5.5$ from $p$.

\theoremptas

\begin{theorem}[from \cite{mathieu2009rank} Theorem~1.2]
\label{theorem:fastlight}
Let $b \in (0,1]$ be a parameter. There exists a randomized polynomial time approximation scheme
(called \textsc{FASTer-Scheme} in \cite{mathieu2009rank}) for the special case of \textsc{Top-Agg}  such that the input  $(n,p)$ satisfies 
\begin{equation*}
\min_{\substack{i,j \in[n] \\ i \neq j}}\; p(\pi_i \neq \pi_j) \geq b \cdot \max_{\substack{i,j \in[n] \\ i \neq j}}\; p(\pi_i \neq \pi_j)
\end{equation*}
The running time\footnote{Recall that
  $f(x) = \tilde{\mathcal O}(g(x))$ if there is a constant $\ell$ such that
  $f(x) = \mathcal O(g(x)\log^\ell (g(x)))$\label{footnote}} is
  $\mathcal O((\log(\frac{1}{b}) + \frac{1}{\varepsilon}) \cdot n^3 \log n)
  + n2^{\tilde{\mathcal O}(1 / (\varepsilon b)^6)}$.
  The algorithm can be derandomized, but
  $n^{\tilde{\mathcal O}(1/(\varepsilon b)^{12})}$
  is added to the running time.
\end{theorem}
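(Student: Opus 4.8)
The plan is to read Theorem~\ref{theorem:fastlight} as a faithful transcription of Theorem~1.2 of \cite{mathieu2009rank} into the vocabulary of \textsc{Top-Agg}; the work is then to exhibit the correspondence between the two problems and to certify that the balancedness hypothesis stated here is exactly the regime covered by that scheme. First I would recast the objective as a weighted feedback-arc-set cost. Setting $w_{ij} := p(\pi_i < \pi_j)$, the definition of the distance gives $\cost{\sigma}{p} = \sum_{i,j} \mathbb 1_{\sigma_i > \sigma_j}\, w_{ij}$, so \textsc{Top-Agg} is exactly the problem of ordering $[n]$ to minimize the total weight of backward arcs in the weighted tournament with arc weights $w_{ij}$. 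The total weight of an unordered pair $\{i,j\}$ is $w_{ij} + w_{ji} = p(\pi_i \neq \pi_j)$, so the displayed inequality asserts precisely that all pairwise total weights lie within a factor $1/b$ of one another.

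Second, I would match this instance to the hypotheses of \cite{mathieu2009rank}. The full-ranking special case is the probability-constrained tournament $w_{ij} + w_{ji} = 1$ (i.e. $b = 1$); the parameter $b$ measures how far the top-list instance, after global rescaling so that $\max_{i \neq j} p(\pi_i \neq \pi_j) = 1$, departs from that uniform case, with all pair-totals then lying in $[b,1]$. I would verify that the key ingredients of their scheme --- the additive-error guarantee, the recursion into bounded-size subproblems solved by exhaustive search, and the sampling used to estimate pairwise weights --- use the weights only through the objective and through this balancedness ratio, so that the PTAS applies to the rescaled balanced instance and returns a $(1+\varepsilon)$-approximate ordering.

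Third, I would read off the running time by substituting the ratio $1/b$ into the bounds of \cite{mathieu2009rank}: the matching and weight-estimation phases contribute the additive $\mathcal O\big((\log(1/b) + 1/\varepsilon)\, n^3 \log n\big)$ term, the exhaustive step on each bounded-size subproblem contributes $n\, 2^{\tilde{\mathcal O}(1/(\varepsilon b)^6)}$ as the subproblem sizes grow polynomially in $1/(\varepsilon b)$, and replacing the sampling by an exhaustive search over the same support yields the derandomization overhead $n^{\tilde{\mathcal O}(1/(\varepsilon b)^{12})}$.

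The main obstacle I expect is the second step: confirming that the Mathieu--Schudy analysis, which is cleanest under the exact normalization $w_{ij} + w_{ji} = 1$, genuinely depends on the weights only through the balancedness ratio and not through that normalization. The delicate point is that $\text{OPT}$ may still be small for a nearly-sorted balanced instance, so one cannot simply invoke an additive $\varepsilon n^2$ bound; their recursion converts additive into multiplicative error using lower bounds on the cost of each sub-instance, and one must check that these lower bounds, together with the triangle-inequality-type relations among the $w_{ij}$ that the scheme exploits, degrade by at most a factor controlled by $b$. Verifying these quantitative dependencies, and that they reproduce exactly the stated exponents in $1/(\varepsilon b)$, is where the real care lies.
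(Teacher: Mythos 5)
Your proposal is correct and coincides with how the paper treats this statement: the paper offers no proof of its own, importing the result verbatim from \cite{mathieu2009rank}, and your translation --- $w_{ij} = p(\pi_i < \pi_j)$, pair-total $w_{ij} + w_{ji} = p(\pi_i \neq \pi_j)$, global rescaling so that all pair-totals lie in $[b,1]$ --- is exactly the intended reading of the hypothesis. The re-verification of the Mathieu--Schudy internals that you flag as the main obstacle is not actually needed, since their Theorem~1.2 is already stated with the parameter $b$ for unnormalized pair-totals (which is why the quoted running times carry $b$); checking the statement alignment, as in your first and third steps, is all that is required, and it is precisely the bookkeeping the paper itself performs when invoking this theorem inside the proof of Theorem~\ref{theorem:ptas} via $\max(\score{i}{p},\score{j}{p}) \leq p(\pi_i \neq \pi_j) \leq \score{i}{p} + \score{j}{p}$.
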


\begin{proof}\emph{(Theorem~\ref{theorem:ptas})}
Let $0 < \varepsilon \leq 3$, let $(n,p)$ be an instance of $\textsc{Top-Agg}$,
let $\sigma^*$ be an optimal solution, and let $\sigma$ be the output of
Algorithm~\algoptas\ on $(n,p)$ with error parameter $\varepsilon$.
The proof of this theorem is in two parts, corresponding to the two steps of
Algorithm~\algoptas.

Firstly, we prove that there exists a ranking $\sigma'$ whose expected cost (over the randomness of $u$) is nearly-optimal
and that \emph{respects} the partition $(E_t)$, in the sense that all candidates of $E_t$ precede all candidates of $E_{t+1}$.
To prove that, let us use Lemma~\ref{lemma:partition} with
$\eta = \varepsilon/3$, and directly obtain that $\mathbb E_u[\cost{\sigma'}{p}] \leq (1+\varepsilon/3) \cost{\sigma^*}{p}$.

Secondly, we prove that a nearly-optimal solution for each interval
$E_t$ can be computed by algorithm \textsc{FASTer-Scheme} from \cite{mathieu2009rank}.
Let $t$ such that $E_t$ is non-empty,
and let $i$ and $j$ be two candidates from $E_t$. 
To prove that we are allowed to use \textsc{FASTer-Scheme} on instance $p_t$,
observe that $p_t(\pi_i \neq \pi_j)=p(\pi_i \neq \pi_j)$.
The critical remark here is that for all distinct $i,j \in [n]$ we have
\[\max(\score{i}{p}, \score{j}{p}) \leq p(\pi_i \neq \pi_j) \leq \score{i}{p} + \score{j}{p}.\]
Therefore, by definition of $E_t$, we have
\[\max_{\substack{i,j \in E_t\\i \neq j}} p(\pi_i \neq \pi_j)\leq 2\max_{i\in E_t}\score{i}{p}
< \exp({\textstyle\frac{u-t}{\varepsilon/3}})
\]
\[
\min_{\substack{i,j \in E_t\\i \neq j}} p(\pi_i \neq \pi_j)\geq\min_{i\in E_t}\score{i}{p}
\geq \exp({\textstyle\frac{u-t-1}{\varepsilon/3}}).\]
Thus, the equation in Theorem~\ref{theorem:fastlight} holds with $1/b := \exp(3/\varepsilon)$.
Algorithm~\textsc{FASTer-Scheme} produces a $(1+\varepsilon/3)$ approximation
of the optimal solution of $p_t$, in time
$\textstyle\mathcal O(|E_t|^3 \log |E_t|(\underbrace{\textstyle\log(\frac{1}{b})}_{\mathcal O(1/\varepsilon)} + \frac{1}{\varepsilon}))
+ |E_t|\hspace{-.3cm}\underbrace{2^{\tilde{\mathcal O}(1 / (\varepsilon b)^6)}}_{\exp(\exp(\mathcal O(1/\varepsilon)))}$.

To finish the proof, observe that Algorithm~\algoptas\ outputs a full-ranking
whose expected cost (over the randomness of \textsc{FASTer-Scheme}) is at most
$(1+\varepsilon/3)\cost{\sigma'}{p}$.
Thus the expected cost (over the randomness of $u$) is at most $(1+\varepsilon/3)^2 \cost{\sigma^*}{p}$.
For all $0 < \varepsilon \leq 3$, we have $(1+\varepsilon/3)^2 \leq (1+\varepsilon)$.

 To derandomize Algorithm~\algoptas, use in step 2 the derandomized version of \textsc{FASTer-Scheme};
 and in step 1 try all possible values of $u$ for which at least one of thresholds $(\exp(\frac{u-t}{\varepsilon/3}))_{t\in\mathbb N}$
 on the scores is equal to the score of one of the candidates,
 and output the best of all rankings thus computed.
\end{proof}

\section{EPTAS for top-$k$-list aggregation.}
\label{section:eptas}
In this section we study $\textsc{Top-$k$-Agg}$, the special case of \textsc{Top-Agg}
when all top-lists in the input voting profile have exactly $k$ top candidates.
The main result of this section is Theorem~\ref{theorem:eptas}, which
proves that Algorithm~\algoeptas\ is an EPTAS.

\begin{algo}{\textbf{Algorithm} \textsc{Score-then-Adjust}}
  \label{algo:eptas}
  \begin{algorithmic}
  \State {\bf Input: } Instance $(n,p)$ of \textsc{Top-$k$-Agg}.
      \State For each candidate $i \in [n]$, compute $\score{i}{p} \gets p(\pi_i < \infty)$.
      \State Define $\sigma'$, the full-ranking obtained by sorting candidates by non-increasing scores.
      \State By permuting the first
      $\lceil (1+\frac{1}{\varepsilon})(k-1)\rceil$ candidates of $\sigma'$,
      choose $\sigma$ which minimizes $K(\sigma, p)$.
      \State {\bf Output} $\sigma$.
  \end{algorithmic}
\end{algo}

Let us give an example of execution of Algorithm~\algoeptas\ 
with $n = 8$, $k = 4$ and $\varepsilon = 3$,
using the instance from Figure~\ref{figure:instance}.
First, we sort candidates by non-increasing scores,
which (for example) gives us the full-ranking $\sigma' = [1,3,2,5,4,6,7,8]$.
Then consider the first $m = \lceil (1+\frac{1}{\varepsilon})(k-1)\rceil = 4$
candidates of $\sigma'$, who are $1,3,2$ and $5$.
To compute the cost of a reordering of those candidates, we just need to consider
a restricted instance on those $m$ candidates\footnote{In this new instance,
top-lists of the voting profile does not necessarily have all the same size.}:
here $\pi_1 = [3,5,1;\dots]$, $\pi_2 = [3,1,5;\dots]$,
$\pi_3 = [1,5,2;\dots]$ and $\pi_4 = [1,2,3;\dots]$.
To find the optimal solution\footnote{One could use a polynomial time
approximation algorithm here, but the resulting algorithm will not be a PTAS},
one can enumerate the $m!$ possible full-rankings, or use a dynamic programming
approach and compute the optimal ordering of each of the $2^m$ subsets of $\{1,3,2,5\}$.
With both methods we find that the optimal reordering is $[1,2,3,5]$.
Hence we have $\sigma = [1,2,3,5,4,6,7,8]$, which is at a distance
$\cost{\sigma}{p} = 5.5$ from $p$.

\theoremeptas

We begin the analysis with a simple observation. Consider a full-ranking $\sigma$ and let $i$ such that $\sigma_i = n$. If an input top-$k$-list $\pi $ ranks $i$ among its top $k$ elements, then at least $n-k$ pairs $\{ i,j\}$ are ranked in reverse order in $\pi$ and in $\sigma$, so $K(\sigma, \pi) \geq n - k$. Thus 
$\cost{\sigma}{p}  \geq (n-k) \cdot p(\pi_i<\infty)=(n-k)\cdot Score(i)$. 
This observation can be generalized, leading to the statement of Lemma \ref{lemma:lb}.

\begin{lemma}\label{lemma:lb}
Let $(n,p)$ be an instance of \textsc{Top-$k$-Agg}.
For any full-ranking $\sigma^*$, we have a lower bound on the objective function:
$\cost{\sigma^*}{p}  \geq \sum_{i : \sigma^*_i > k} (\sigma^*_i - k) \cdot \score{i}{p} .$
\end{lemma}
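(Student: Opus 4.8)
The plan is to prove the bound first for a single input top-$k$-list $\pi$ and then average over the voting profile $p$. Fix a full-ranking $\sigma^*$ and a top-$k$-list $\pi$. Extending the observation that precedes the lemma, I would charge each inverted pair to its later endpoint in $\sigma^*$: in the double sum defining $\cost{\sigma^*}{\pi}$, every inverted pair $\{i,j\}$ is counted exactly once, by the ordered pair $(i,j)$ satisfying $\sigma^*_i > \sigma^*_j$ and $\pi_i < \pi_j$. So I would fix a candidate $i$ that is a top candidate in $\pi$ (i.e. $\pi_i < \infty$) with $\sigma^*_i > k$, and lower bound the number of pairs charged to it, namely the $j$ with $\sigma^*_j < \sigma^*_i$ and $\pi_i < \pi_j$.

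The key counting step is as follows. Exactly $\sigma^*_i - 1$ candidates are placed before $i$ in $\sigma^*$. For such a $j$, the pair $(i,j)$ is automatically counted unless $\pi$ also ranks $j$ before $i$, that is unless $\pi_j < \pi_i$; since the ranks of top candidates are distinct and $\pi_i < \infty$, the number of candidates with $\pi_j < \pi_i$ is exactly $\pi_i - 1 \le k-1$. Hence the number of inverted pairs charged to $i$ is at least $(\sigma^*_i - 1) - (\pi_i - 1) = \sigma^*_i - \pi_i \ge \sigma^*_i - k$.

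Because distinct candidates $i$ are charged disjoint sets of ordered pairs, summing this bound over all top candidates $i$ of $\pi$ with $\sigma^*_i > k$ gives $\cost{\sigma^*}{\pi} \ge \sum_{i :\, \sigma^*_i > k,\ \pi_i < \infty} (\sigma^*_i - k)$. To finish, I would take the $p$-weighted average over $\pi$ and exchange the order of summation; the resulting inner sum $\sum_{\pi :\, \pi_i < \infty} p(\pi)$ is precisely $\score{i}{p} = p(\pi_i < \infty)$, which yields the claimed lower bound on $\cost{\sigma^*}{p}$.

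The argument is essentially a careful double-counting, so the step requiring the most attention is the bookkeeping rather than any analytic estimate: I must confirm that charging every inverted pair to its later-in-$\sigma^*$ endpoint avoids double counting across candidates, and that the \emph{only} way a candidate ranked before $i$ in $\sigma^*$ escapes contributing is by being one of the at most $k-1$ candidates that $\pi$ ranks before $i$. The restriction to top candidates $i$ is what makes this count clean, and it is also why the sum ranges only over $i$ with $\pi_i < \infty$, producing the score factor after averaging.
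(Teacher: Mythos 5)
Your proof is correct and follows essentially the same route as the paper's: both charge each inverted pair to its later-in-$\sigma^*$ endpoint $i$, count at least $(\sigma^*_i - 1) - (\pi_i - 1) \geq \sigma^*_i - k$ pairs for each top candidate $i$ of $\pi$ with $\sigma^*_i > k$, and then average over $\pi$ with weights $p(\pi)$ to produce the factor $\score{i}{p}$. The paper phrases the per-candidate count as a set difference rather than an ``escape'' argument, but the bookkeeping is identical.
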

\begin{proof}
We write 
\[\cost{\sigma^*}{p} 
= 
\sum_{i \in [n]}  \sum_{\pi \in \mathfrak T_n^k} 
\left|\left\{j \in [n] \;|\; \substack{\pi_i < \pi_j\\\sigma^*_j < \sigma^*_i}\right\}\right| \cdot p(\pi)
\]
Let $i \in [n]$ be a candidate and $\pi \in \mathfrak T_n^k$ be a top-list in which $i$ is a top candidate:
\[
\left|\left\{j \;|\; \substack{\pi_i < \pi_j\\\sigma^*_j < \sigma^*_i}\right\}\right|=
\Big|\Big\{j \;|\; \substack{\sigma^*_j < \sigma^*_i}\Big\}\Big| -
\left|\left\{j \;|\; \substack{\pi_j \leq \pi_i\\\sigma^*_j < \sigma^*_i}\right\}\right|
\geq \sigma^*_i-k.
\]
Thus, summing only over $i \in [n]$ such that $\sigma^*_i > k$ and over $\pi \in \mathfrak T_n^k$ for which $i$ is a top candidate.
\begin{align*}
\cost{\sigma^*}{p}  &\geq
\sum_{i:\sigma_i^* \geq k}
\sum_{\pi}
(\sigma^*_i-k) \cdot p(\pi) \cdot \mathbb 1_{\pi_i < \infty} \\&\geq
\sum_{i : \sigma^*_i > k} (\sigma^*_i - k) \cdot \score{i}{p}
\end{align*}
This conclude the proof.
\end{proof}

\begin{proof}\emph{(Theorem \ref{theorem:eptas})}
Let $\sigma$ a full-ranking and let $\pi$ a top-list.
The distance between $\sigma$ and $\pi$ can be split in two:
$K(\sigma, \pi) = K_{top}(\sigma, \pi) + K_{score}(\sigma, \pi)$.
\[
K_{top}(\sigma, \pi) :=
\sum_{i \in [n]} \sum_{j \in [n]} \mathbb 1_{\sigma_i > \sigma_j} \cdot \mathbb 1_{\pi_i < \pi_j < \infty}
\]
\[ K_{score}(\sigma, \pi) :=
\sum_{i \in [n]} \sum_{j \in [n]} \mathbb 1_{\sigma_i > \sigma_j} \cdot \mathbb 1_{\pi_i < \pi_j = \infty}
\]
The value $K_{top}(\sigma, \pi)$ is the number of inversions of top candidates
of $\pi$, between $\sigma$ and $\pi$; whereas $K_{score}(\sigma, \pi)$ can be
seen as the distance between $\sigma$ and a bucket-order with two buckets:
a partial order where all candidates of the first bucket (top candidates of $\pi$)
are ranked before candidates of the second bucket.

We now define $K_{top}(\sigma, p)$ and $K_{score}(\sigma, p)$ as weighted
averages of  $K_{top}(\sigma, \pi)$ and $K_{score}(\sigma, \pi)$,
over all top-lists $\pi$ of the voting profile $p$.
We have $K(\sigma, p) = K_{top}(\sigma, p) + K_{score}(\sigma, p)$.
One can observe that sorting candidates by decreasing score actually minimizes
$K_{score}(\sigma, p)$; this result is proved in Theorem~3 of
\cite{ailon2010aggregation}.
More precisely, whenever there are two candidates $i$ and $j$ such that
$\sigma_i = \sigma_j + 1$ and $\score{i}{p} > \score{j}{p}$,
the pair $(i,j)$ costs $p(\pi_i < \pi_j = \infty)$
in $K_{score}(\sigma, p)$. However, by definition of score:
\begin{align*}
& 0 < \score{i}{p} - \score{j}{p} \\&=
p(\pi_j < \pi_i < \infty) + p(\pi_i < \pi_j < \infty) + p(\pi_i < \pi_j = \infty)\\
&-p(\pi_i < \pi_j < \infty) - p(\pi_j < \pi_i < \infty) - p(\pi_j < \pi_i = \infty)\\
&= p(\pi_i < \pi_j = \infty) - p(\pi_j < \pi_i = \infty)
\end{align*}
Therefore swapping candidates $i$ and $j$ strictly decreases $K_{score}(\sigma, p)$.

Let $m:= \lceil (1+\frac{1}{\varepsilon})(k-1)\rceil$. Let $\sigma''$ denote the full-ranking obtained from $\sigma'$ by reordering the first $m$ candidates according to their relative order in the (unknown) optimal order~$\sigma^*$.
The algorithm outputs a full-ranking $\sigma$ such that
$\cost{\sigma}{p}\leq \cost{\sigma''}{p}$.

\begin{figure}[h!]
\begin{center}
\begin{tikzpicture}[scale=.75]
\draw (0,0) -- (8.5,0);
\draw (0,1) -- (8.5,1);
\draw (0,2) -- (8.5,2);
\draw (3.25,-.1) -- (3.25,.1);
\draw (0.5,2)--(0.5,1); \draw (1.5,0)--(0.5,1);
\draw (1.5,2)--(1.0,1); \draw (0.5,0)--(1.0,1);
\draw (2.0,2)--(1.5,1); \draw (2.0,0)--(1.5,1);
\draw (3.5,2)--(2.0,1); \draw (3.0,0)--(2.0,1);
\draw (5.5,2)--(2.5,1); \draw (1.0,0)--(2.5,1);
\draw (6.0,2)--(3.0,1); \draw (2.5,0)--(3.0,1);
\draw[densely dotted] (1.0,2)--(3.5,1); \draw[densely dotted] (3.5,0)--(3.5,1);
\draw[densely dotted] (3.0,2)--(4.0,1); \draw[densely dotted] (4.0,0)--(4.0,1);
\draw[densely dotted] (2.5,2)--(4.5,1); \draw[densely dotted] (4.5,0)--(4.5,1);
\draw[densely dotted] (7.0,2)--(5.0,1); \draw[densely dotted] (5.0,0)--(5.0,1);
\draw[densely dotted] (4.0,2)--(5.5,1); \draw[densely dotted] (5.5,0)--(5.5,1);
\draw[densely dotted] (5.0,2)--(6.0,1); \draw[densely dotted] (6.0,0)--(6.0,1);
\draw[densely dotted] (4.5,2)--(6.5,1); \draw[densely dotted] (6.5,0)--(6.5,1);
\draw[densely dotted] (7.5,2)--(7.0,1); \draw[densely dotted] (7.0,0)--(7.0,1);
\draw[densely dotted] (8.0,2)--(7.5,1); \draw[densely dotted] (7.5,0)--(7.5,1);
\draw[densely dotted] (6.5,2)--(8.0,1); \draw[densely dotted] (8.0,0)--(8.0,1);
\node at (-.5,2) {$\sigma^*$};
\node at (-.5,1) {$\sigma''$};
\node at (-.5,0) {$\sigma'$};
\node[anchor=north] at (5.75,0) {$\underbrace{\hspace{3.75cm}}_{S}$};
\foreach \x in {.5,1,...,3} \fill (\x,0) circle (1.5pt);
\foreach \x in {.5,1,...,3} \fill (\x,1) circle (1.5pt);
\foreach \x in {3.5,4,...,8} \draw (\x,0) circle (1.5pt);
\foreach \x in {3.5,4,...,8} \draw (\x,1) circle (1.5pt);
\foreach \x in {0.5,1.5,2.0,3.5,5.5,6.0} \fill (\x,2) circle (1.5pt);
\foreach \x in {1.0,3.0,2.5,7.0,4.0,5.0,4.5,7.5,8.0,6.5}
  \draw (\x,2) circle (1.5pt);
\end{tikzpicture}
\vspace{-.6cm}
\end{center}
\caption{Graphical representation of full-rankings $\sigma^*$, $\sigma''$ and $\sigma'$. Elements from $S$ are represented with light circles.}
\end{figure}

Letting $S$ denote the set of candidates of rank greater than $m$ in $\sigma'$, observe that $\sigma''$ can also be defined from $\sigma^*$ by doing a partial bubble sort, repeatedly swapping adjacent elements whenever their scores are out of order and at least one of the two is in $S$; thus $K_{score}(\sigma'', p) \leq K_{score}(\sigma^*, p)$. Moreover, as $\sigma''$ and $\sigma^*$ can disagree on the relative order of two candidates only if at least one of the two is in $S$:
\begin{align*}
&K_{top}(\sigma'',p) - K_{top}(\sigma^*,p) \\&=
 \sum_{i,j \in [n]} \mathbb 1_{\left|\substack{\sigma''_i > \sigma''_j\\\sigma^*_i < \sigma^*_j}\right.} \cdot \Big(\underbrace{
p(\pi_i < \pi_j < \infty) - p(\pi_j < \pi_i < \infty)}_{\text{smaller than } p(\pi_i  < \infty \text{ and }\pi_j < \infty)}
\Big)\\
&\leq \sum_{s\in S}\sum_{t: t\neq s} p(\pi_s < \infty \hbox{ and }\pi_t < \infty).
\end{align*}
  Since the input consists of top-$k$-lists\footnote{Actually the Theorem also holds when the input top-lists have ties among the top $k$ candidates; the only thing that matters is that at least $n-k$ candidates are such that $\pi(i)=\infty$.}, for all $s \in S$ we have
\begin{align*}
&\sum_{t: t\neq s} p(\pi_s < \infty \hbox{ and }\pi_t < \infty)\\
&= \sum_{t:t\neq s} p(\pi_s < \infty)\cdot p(\pi_t < \infty\;|\;\pi_s < \infty)\\
&\leq (k-1)\cdot p(\pi_s < \infty)=(k-1)\cdot\score{s}{p}.
\end{align*}
Thus $K_{top}(\sigma'',p) - K_{top}(\sigma^*,p) \leq (k-1)\sum_{s\in S} \score{s}{p}$.
Since $S$ is also the set of $n-m$ elements with the smallest scores, we have
$\sum_{s\in S}\score{s}{p}\leq  \sum_{i: \sigma^*_i>m}\score{i}{p}$.
In summary, we proved that:
\[\cost{\sigma''}{p} - \cost{\sigma^*}{p} \leq (k-1) \sum_{i: \sigma^*_i>m}\score{i}{p}\]
Applying Lemma~\ref{lemma:lb}, with the fact that we always have $m \geq k$, 
\[\sum_{i: \sigma^*_i>m}\score{i}{p}
\leq \sum_{i: \sigma^*_i>m}\frac{(\sigma^*_i - k) \cdot \score{i}{p}}{m+1-k} \leq \frac{\cost{\sigma^*}{p}}{m+1-k}\]
Recalling  that $m = \lceil (1+\frac{1}{\varepsilon})(k-1)\rceil \geq k-1+ (k-1)/\varepsilon$ we finally obtain
\[
\cost{\sigma''}{p} \leq
\underbrace{\left(1+\frac{k-1}{m+1-k}\right)}_{\leq 1+\varepsilon} \cdot \cost{\sigma^*}{p}.
\]
 To achieve the claimed running time, we see that computing $\sigma'$ takes time $O(n\log n)$. To find the optimal reordering of the first $m$ candidates, we first precompute the values of $p(\pi_i < \pi_j)$ for all $i,j $ such that $\sigma'_i \leq m$ and $\sigma'_j \leq m$. Then, we use dynamic programming: for each subset of the first $m$ candidates of $\sigma'$ we try all possibilities for the candidate that will be ranked first, and store this candidate together with the cost of the associated solution. Time complexity is $O(m \cdot 2^m)$. Space complexity is exponential; if someone needs to be memory efficient the exhaustive search approach might be preferable.
\end{proof}
We remark that the running time of Algorithm \algoeptas\ is quasi-linear
as long as $k = o(\log n)$, and polynomial as long as $k=\mathcal O(\log n )$.






\newpage
\bibliographystyle{alpha}
\bibliography{biblio}

\newcommand{\etalchar}[1]{$^{#1}$}
\begin{thebibliography}{FKM{\etalchar{+}}16}

\bibitem[ACN05]{ailon2005aggregating}
Nir Ailon, Moses Charikar, and Alantha Newman.
\newblock Aggregating inconsistent information: ranking and clustering.
\newblock In {\em Proceedings of the 37th Annual ACM Symposium on Theory of
  Computing (STOC)}. ACM, 2005.

\bibitem[ACN08]{ailon2008aggregating}
Nir Ailon, Moses Charikar, and Alantha Newman.
\newblock Aggregating inconsistent information: ranking and clustering.
\newblock {\em Journal of the ACM (JACM)}, 55(5), 2008.

\bibitem[Ail07]{ailon2007aggregation}
Nir Ailon.
\newblock Aggregation of partial rankings, p-ratings and top-m lists.
\newblock In {\em Proceedings of the 18th Annual ACM-SIAM Symposium on Discrete
  Algorithms (SODA)}. SIAM, 2007.

\bibitem[Ail10]{ailon2010aggregation}
Nir Ailon.
\newblock Aggregation of partial rankings, p-ratings and top-m lists.
\newblock {\em Algorithmica}, 57(2), 2010.

\bibitem[AM12]{ali2012experiments}
Alnur Ali and Marina Meil{\u{a}}.
\newblock Experiments with {K}emeny ranking: What works when?
\newblock {\em Mathematical Social Sciences}, 64(1), 2012.

\bibitem[Arr51]{arrow1951social}
Kenneth~J. Arrow.
\newblock {\em Social Choice and Individual Values}.
\newblock 1951.

\bibitem[Bor81]{borda1781memoire}
Jean-Charles~de Borda.
\newblock {\em M{\'e}moire sur les {\'e}lections au scrutin}.
\newblock 1781.

\bibitem[BTT89]{bartholdi1989voting}
John Bartholdi, Craig~A. Tovey, and Michael~A. Trick.
\newblock Voting schemes for which it can be difficult to tell who won the
  election.
\newblock {\em Social Choice and Welfare}, 6(2), 1989.

\bibitem[CFR10]{coppersmith2010ordering}
Don Coppersmith, Lisa~K. Fleischer, and Atri Rurda.
\newblock Ordering by weighted number of wins gives a good ranking for weighted
  tournaments.
\newblock {\em ACM Transactions on Algorithms (TALG)}, 6(3), 2010.

\bibitem[CSS98]{cohen1998learning}
William~W Cohen, Robert~E Schapire, and Yoram Singer.
\newblock Learning to order things.
\newblock In {\em Advances in Neural Information Processing Systems (NIPS)},
  1998.

\bibitem[CW09]{coleman2009ranking}
Tom Coleman and Anthony Wirth.
\newblock Ranking tournaments: Local search and a new algorithm.
\newblock {\em ACM Journal of Experimental Algorithmics (JEA)}, 14, 2009.

\bibitem[dC85]{condorcet1785essai}
Marquis de~Condorcet.
\newblock {\em Essai sur l'application de l'analyse {\`a} la probabilit{\'e}
  des d{\'e}cisions rendues {\`a} la pluralit{\'e} des voix}.
\newblock 1785.

\bibitem[DG77]{diaconis1977spearman}
Persi Diaconis and Ronald~L. Graham.
\newblock Spearman's footrule as a measure of disarray.
\newblock {\em Journal of the Royal Statistical Society. Series B
  (Methodological)}, 1977.

\bibitem[DKNS01]{dwork2001rank}
Cynthia Dwork, Ravi Kumar, Moni Naor, and D.~Sivakumar.
\newblock Rank aggregation methods for the web.
\newblock In {\em Proceedings of the 10th International Conference on World
  Wide Web (WWW)}. ACM, 2001.

\bibitem[FKM{\etalchar{+}}04]{fagin2004conference}
Ronald Fagin, Ravi Kumar, Mohammad Mahdian, D.~Sivakumar, and Erik Vee.
\newblock Comparing and aggregating rankings with ties.
\newblock In {\em Proceedings of the 23rd ACM SIGMOD-SIGACT-SIGART Symposium on
  Principles of Database Systems (PODS)}. ACM, 2004.

\bibitem[FKM{\etalchar{+}}06]{fagin2006journal}
Ronald Fagin, Ravi Kumar, Mohammad Mahdian, D.~Sivakumar, and Erik Vee.
\newblock Comparing partial rankings.
\newblock {\em SIAM Journal on Discrete Mathematics (SIDMA)}, 20(3), 2006.

\bibitem[FKM{\etalchar{+}}16]{fagin2016algorithmic}
Ronald Fagin, Ravi Kumar, Mohammad Mahdian, D~Sivakumar, and Erik Vee.
\newblock An algorithmic view of voting.
\newblock {\em SIAM Journal on Discrete Mathematics (SIDMA)}, 30(4), 2016.

\bibitem[FKS03a]{fagin2003conference}
Ronald Fagin, Ravi Kumar, and D.~Sivakumar.
\newblock Comparing top k lists.
\newblock In {\em Proceedings of the 14th Annual ACM-SIAM Symposium on Discrete
  Algorithms (SODA)}. SIAM, 2003.

\bibitem[FKS03b]{fagin2003journal}
Ronald Fagin, Ravi Kumar, and D.~Sivakumar.
\newblock Comparing top k lists.
\newblock {\em SIAM Journal on Discrete Mathematics (SIDMA)}, 17(1), 2003.

\bibitem[FKS03c]{fagin2003efficient}
Ronald Fagin, Ravi Kumar, and Dandapani Sivakumar.
\newblock Efficient similarity search and classification via rank aggregation.
\newblock In {\em Proceedings of the 2003 International Conference on
  Management of data (MOD)}. ACM, 2003.

\bibitem[Kem59]{kemeny1959mathematics}
John~G. Kemeny.
\newblock Mathematics without numbers.
\newblock {\em Daedalus}, 88(4), 1959.

\bibitem[Kem62]{kemeny1962mathematical}
John~G. Kemeny.
\newblock {\em Mathematical Models in the Social Sciences}.
\newblock 1962.

\bibitem[Li14]{li2014learning}
Hang Li.
\newblock Learning to rank for information retrieval and natural language
  processing.
\newblock {\em Synthesis Lectures on Human Language Technologies}, 7(3), 2014.

\bibitem[LWX17]{li2017comparative}
Xue Li, Xinlei Wang, and Guanghua Xiao.
\newblock A comparative study of rank aggregation methods for partial and top
  ranked lists in genomic applications.
\newblock {\em Briefings in bioinformatics}, 2017.

\bibitem[MS07]{kenyon2007rank}
Claire~Kenyon Mathieu and Warren Schudy.
\newblock How to rank with few errors.
\newblock In {\em Proceedings of the 39th Annual ACM Symposium on Theory of
  Computing (STOC)}. ACM, 2007.

\bibitem[MS09]{mathieu2009rank}
Claire Mathieu and Warren Schudy.
\newblock How to rank with fewer errors.
\newblock 2009.
\newblock Unpublished,
  \url{http://cs.brown.edu/people/wschudy/papers/fast_journal.pdf}.

\bibitem[SZ09]{schalekamp2009rank}
Frans Schalekamp and Anke~van Zuylen.
\newblock Rank aggregation: together we're strong.
\newblock In {\em Proceedings of the 11th Workshop on Algorithm Engineering and
  Experiments (ALENEX)}. SIAM, 2009.

\bibitem[vZW09]{van2009deterministic}
Anke van Zuylen and David~P Williamson.
\newblock Deterministic pivoting algorithms for constrained ranking and
  clustering problems.
\newblock {\em Mathematics of Operations Research}, 34(3), 2009.

\bibitem[YL78]{young1978consistent}
Hobart~Peyton Young and Arthur Levenglick.
\newblock A consistent extension of {C}ondorcet's election principle.
\newblock {\em SIAM Journal on Applied Mathematics (SIAP)}, 35(2), 1978.

\end{thebibliography}

\end{document}